\documentclass[twocolumn,journal]{IEEEtran}

\usepackage{amsfonts}
\usepackage{amsmath}
\usepackage{amsthm}
\usepackage{amssymb}
\usepackage{graphicx}
\usepackage[T1]{fontenc}
\usepackage{supertabular}
\usepackage{longtable}
\usepackage[usenames,dvipsnames]{color}
\usepackage{bbm}
\usepackage{fancyhdr}
\usepackage{breqn}
\usepackage{fixltx2e}
\usepackage{capt-of}
\setcounter{MaxMatrixCols}{10}
\usepackage{tikz}
\usetikzlibrary{matrix}
\usepackage{endnotes}
\usepackage{soul}
\usepackage{marginnote}

\newcommand{\mathsym}[1]{}
\newcommand{\unicode}[1]{}

\hyphenation{op-tical net-works semi-conduc-tor}

\usepackage{float}

\usepackage[framemethod=TikZ]{mdframed}
\usepackage[framemethod=TikZ]{mdframed}
\usepackage{framed}


\mdfsetup{%
	outerlinewidth=1,skipabove=20pt,backgroundcolor=yellow!50, outerlinecolor=black,innertopmargin=0pt,splittopskip=\topskip,skipbelow=\baselineskip, skipabove=\baselineskip,ntheorem,roundcorner=5pt}

\mdtheorem[nobreak=true,outerlinewidth=1,
backgroundcolor=yellow!50, outerlinecolor=black,innertopmargin=0pt,splittopskip=\topskip,skipbelow=\baselineskip, skipabove=\baselineskip,ntheorem,roundcorner=5pt,font=\itshape]{result}{Result}

\mdtheorem[nobreak=true,outerlinewidth=1,
backgroundcolor=yellow!50, outerlinecolor=black,innertopmargin=0pt,splittopskip=\topskip,skipbelow=\baselineskip, skipabove=\baselineskip,ntheorem,roundcorner=5pt,font=\itshape]{theorem}{Theorem}

\mdtheorem[nobreak=true,outerlinewidth=1,
backgroundcolor=gray!10, outerlinecolor=black,innertopmargin=0pt,splittopskip=\topskip,skipbelow=\baselineskip, skipabove=\baselineskip,ntheorem,roundcorner=5pt,font=\itshape]{remark}{Remark}

\mdtheorem[nobreak=true,outerlinewidth=1,
backgroundcolor=pink!30, outerlinecolor=black,innertopmargin=0pt,splittopskip=\topskip,skipbelow=\baselineskip, skipabove=\baselineskip,ntheorem,roundcorner=5pt,font=\itshape]{quaestio}{Quaestio}

\mdtheorem[nobreak=true,outerlinewidth=1,
backgroundcolor=yellow!50, outerlinecolor=black,innertopmargin=5pt,splittopskip=\topskip,skipbelow=\baselineskip, skipabove=\baselineskip,ntheorem,roundcorner=5pt,font=\itshape]{background}{Background}

\usepackage{hyperref}
%


\begin{document}

\title{{\color{Red}
Tail Option Pricing Under Power Laws}}

\author{Nassim Nicholas Taleb\IEEEauthorrefmark{1}\IEEEauthorrefmark{2}\IEEEauthorrefmark{3}, Brandon Yarckin\IEEEauthorrefmark{1},  Chitpuneet Mann\IEEEauthorrefmark{1}, Damir Delic\IEEEauthorrefmark{1}, and Mark Spitznagel\IEEEauthorrefmark{1} 
    
   \IEEEauthorblockA{  \IEEEauthorrefmark{1} Universa Investments
   \IEEEauthorrefmark{2}Tandon School of Engineering, New York University\\   \IEEEauthorrefmark{3}Corresponding author, nnt1@nyu.edu \\
  }

Revision, March 2023}

\maketitle

\begin{mdframed}
\smallskip
\begin{abstract}

	We build a methodology that takes a given option price in the tails with strike $K$ and extends (for calls, all strikes > $K$, for puts all  strikes $< K$) assuming the continuation falls into what we define as "Karamata Constant" over which the strong Pareto law holds. The heuristic produces relative prices for options, with for  sole parameter the tail index $\alpha$, under some mild arbitrage constraints. 
	
	Usual restrictions such as finiteness of variance are not required.
	
	The heuristic allows us to scrutinize the volatility surface and test various theories of relative tail option overpricing (usually built on thin tailed models and minor modifications/fudging of the Black-Scholes formula).
\end{abstract}
\end{mdframed}
\begin{figure}
\includegraphics[width=\linewidth]{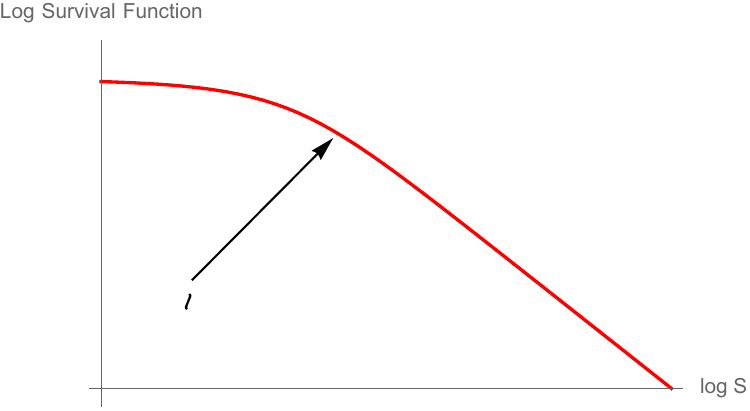}
 \caption{The Karamata constant where the slowly moving function is safely replaced by a constant $L(S) = l$. The constant varies whether we use the price S or its geometric return --but not the asymptotic slope which corresponds to the tail index $\alpha$.}\label{karamataconstant}
\end{figure}

\section{Introduction}
We\footnote{The introduction of power laws in option pricing took place in 1994 with the works of Bouchad and Sornette \cite{bouchaud1994black}; the main authors of this paper started implementing in their option trading the current heuristic methods in 2004 upon a meeting with Benoit Mandelbrot. Recent work along these lines was done by Hamidieh \cite{hamidieh2017estimating} for the opposite problem: estimating tail exponent from option prices.}
start by restating the conventional definition of the power law class, 
by the property of the survival function. Let $X$ be a random variable belonging to the class of distributions with a
"power law" (right) tail, that is: 
\begin{equation}
\mathbb{P}(X>x)\sim L(x)\,x^{-\alpha }  \label{powerlaweq}
\end{equation}%
where $L:\left[ x_{\min },+\infty \right) \rightarrow \left( 0,+\infty
\right) $ is a slowly varying function\index{Slowly varying function}, defined as $\lim_{x\rightarrow
+\infty }\frac{L(kx)}{L(x)}=1$ for any $k>0$, \cite{bingham1989regular}. 

The survival function of $X$ is called to belong to the "regular variation" class $RV_\alpha$. More specifically, a function $f:\mathbb{R}^+ \rightarrow \mathbb{R}^+$ is index varying at infinity  with index $\rho$ ($f \in RV_\rho$) when $$ \lim_{t\to \infty} \frac{f(t x)}{f(t)}=x^{\rho}$$.

More practically, there is a point where $L(x)$ approaches its limit
, becoming a constant as in Fig. \ref{karamataconstant}--we call it the "Karamata constant". Beyond such value the tails for power laws are calibrated using such standard techniques as the Hill estimator. The distribution in that zone is dubbed the strong Pareto law by B. Mandelbrot \cite{mandelbrot1960pareto},\cite{dyer1981structural}.\footnote{For risk neutral estimation of tail densities, see the reviews in Figlewski \cite{Figlewski2018Risk} and Reinke \cite{reinke2020risk}. Traditional methods of interpolation, one can see, are based on the Black-Scholes equation which fudges a Gaussian by changing the parameters per realization of the state variable --we can already note that our approach, thanks to the properties of power laws, is based on extrapolation. For an exposition of strike densities that is Black-Scholes free, see \cite{Taleb2015unique}.}

\section{Call Pricing beyond the "Karamata constant"}
\begin{figure}
\includegraphics[width=\linewidth]{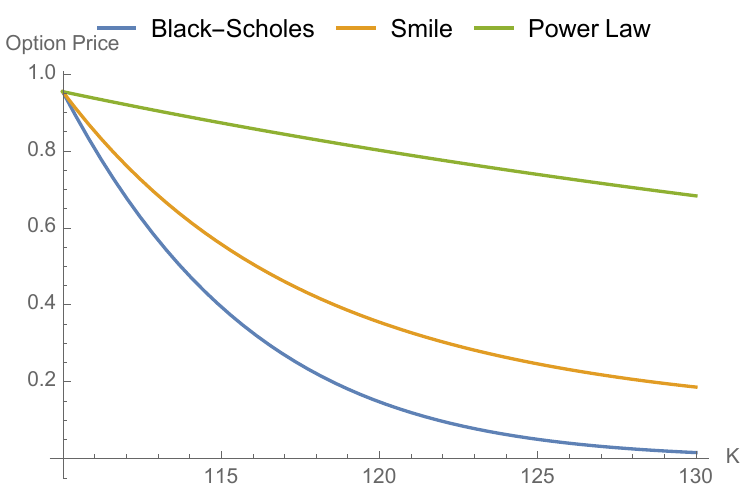}
\caption{We show a straight Black Scholes option price (constant volatility), one with a volatility "smile", i.e. the scale increases in the tails, and power law option prices. Under the simplified case of a power law distribution for the underlying, option prices are linear to strike.}\label{slope}
\end{figure}

Now define a European call price $C(K)$ with a strike $K$ and an underlying security price $S$, with $K, S \in \left( 0,+\infty
\right) $, as $(S-K)^+$, with its valuation performed under some probability measure $\mathbb{P}$, thus allowing us to price the option as $\mathbbm{E}_P(S-K)^+=\int_K^\infty (S-K) dS$.  This allows us to immediately prove the following results under the two main approaches. 
\subsection{First approach; the underlying, $S$ is in the regular variation class}
We start with a simplified case, to build the intuition.
	Let $S$ have a survival function in the regular variation class $RV_\alpha$ as per \ref{powerlaweq}. Let $\nu$ be a scaling constant. For all $K>l>0$ and $\alpha>1$,
\begin{equation}
C(K)=	\frac{K^{1-\alpha } l^{\alpha }}{\alpha -1} \nu\label{option price}.
\end{equation}
A brief comment: we used $l^\alpha$ rather than $l$, and introduced another constant $\nu$ to make the pseudo-density function integrate to unity --as we will see, these two constants disappear from the final equations. Although burdensome in exposition, the apparently unnecessary exponent $\alpha$ for $l$ makes our distribution in beyond the Karamata constant similar to the standard Pareto.
\begin{remark}
We note that the parameters $l$ and $\nu$, when derived from an existing option price, contains all necessary information about the probability distribution below $S=l$, which under a given $\alpha$ parameter makes it unnecessary to estimate the mean, the "volatility" (that is, scale) and other attributes.
\end{remark}

Let us assume that $\alpha$ is exogenously set (derived from fitting distributions, or, simply from experience, in both cases $\alpha$ is supposed to fluctuate minimally \cite{taleb2019statistical} ).
We note that $C(K)$ is invariant to distribution calibrations and the only parameters needed $l$ which, being constant, disappears in ratios.  Now consider as set the market price of an "anchor" tail option in the market is $C_m$ with strike $K_1$, defined as an option for the strike of which other options are priced in relative value.  We can simply generate all further strikes from $l=\left((\alpha -1) C_m K_1^{\alpha -1}\right)^{1/\alpha }$ and applying Eq. \ref{option price}. 
\begin{result}[Relative Pricing under Distribution for $S$]
	For $K_1, K_2 \geq l$, 
	\begin{equation}
			C(K_2)=\left(\frac{K_2}{K_1}\right)^{1-\alpha} C(K_1).\label{simplified ratio}
	\end{equation}
\end{result}
The advantage is that all parameters in the distributions are eliminated: all we need is the price of the tail option and the $\alpha$ to build a unique pricing mechanism.

\begin{remark}[Avoiding confusion about $L$ and $\alpha$]
The tail index $\alpha$ and Karamata constant $l$ should correspond to the assigned distribution for the specific underlying.	 A tail index $\alpha$ for $S $ in the regular variation class as as per \ref{powerlaweq} leading to Eq. \ref{option price} is different from that for $r=\frac{S-S_0}{S_0} \in RV_\alpha$ . For consistency, each should have its own Zipf plot and other representations. 
\begin{enumerate}
	\item If $\mathbb{P}(X>x)\sim L_a(x)\,x^{-\alpha }$, and $\mathbb{P}(\frac{X-X_0}{X_0}>\frac{x-X_0}{X_0})\sim L_b(x)\,x^{-\alpha }$, the $\alpha$ constant will be the same, but the the various $L_{(.)}$ will be reaching their constant level at a different rate.
	\item If $r_c=\log\frac {S}{S_0}$, it is not in the regular variation class, see theorem next.

\end{enumerate}
\end{remark}
The reason $\alpha$ stays the same is owing to the scale-free attribute of the tail index.
\begin{theorem}[Log returns]
	Let S be a random variable with survival function  $\varphi(s)=L(s) s^{-\alpha} \in RV_\alpha$, where $L(.)$ is a slowly varying function. Let $r_l$ be the log return 
	$r_l=\log\frac {s}{s_0} $. 
	$\varphi_{r_l}(r_l)$ is not in the $RV_\alpha$ class.
\end{theorem}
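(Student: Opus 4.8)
The plan is to transform the survival function explicitly, observe that the power tail of $S$ turns into an \emph{exponential} tail in the log return, and then show that the regular-variation limit collapses to $0$ or $\infty$ rather than converging to a finite power $x^\rho$. First I would exploit that $r_l=\log(s/s_0)$ is a monotone increasing transformation of $S$, so inverting ($S=s_0 e^{r_l}$) gives directly
\[
\varphi_{r_l}(x)=\mathbb{P}(r_l>x)=\mathbb{P}(S>s_0 e^x)=\varphi(s_0 e^x)=L(s_0 e^x)\,s_0^{-\alpha}\,e^{-\alpha x}.
\]
The structural point is that the polynomial factor $s^{-\alpha}$ has become the exponential factor $e^{-\alpha x}$, and the only thing that could possibly restore regular variation is the slowly varying prefactor $L(s_0 e^x)$.

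Next I would form the defining ratio of the class $RV_\rho$ and pass to logarithms, which linearizes the exponential:
\[
\log\frac{\varphi_{r_l}(tx)}{\varphi_{r_l}(t)}=\log\frac{L(s_0 e^{tx})}{L(s_0 e^t)}-\alpha\, t\,(x-1).
\]
The dominant term on the right is $-\alpha t(x-1)$, which is linear in $t$. To finish I would invoke the standard fact that every slowly varying function is sub-polynomial, i.e. $\log L(y)/\log y\to 0$ as $y\to\infty$ (immediate from Karamata's representation theorem in \cite{bingham1989regular}). Setting $y=s_0 e^t$ so that $\log y=t+\log s_0$, this yields $\log L(s_0 e^t)=o(t)$ and likewise $\log L(s_0 e^{tx})=o(t)$ for each fixed $x$, so the $L$-contribution is swamped and
\[
\log\frac{\varphi_{r_l}(tx)}{\varphi_{r_l}(t)}=-\alpha\,t\,(x-1)+o(t).
\]
Exponentiating, the ratio tends to $0$ for $x>1$ and to $+\infty$ for $x<1$. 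Since neither limit equals a finite value $x^\rho$, the function $\varphi_{r_l}$ lies in no class $RV_\rho$ with $\rho$ finite; in fact it is rapidly varying (index $-\infty$), which establishes the theorem.

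The hard part is precisely the control of $L(s_0 e^{tx})/L(s_0 e^t)$. One cannot simply quote the defining property $L(ky)/L(y)\to 1$, because here the ratio of the two arguments is $e^{t(x-1)}$, which \emph{diverges} with $t$ rather than staying a fixed constant $k$; the naive slowly varying estimate therefore does not apply. The clean way past this obstruction is the uniform sub-polynomial bound above, or equivalently Potter's bounds, which give $A^{-1}e^{-\delta t(x-1)}\le L(s_0 e^{tx})/L(s_0 e^t)\le A\,e^{\delta t(x-1)}$ for any $\delta>0$ and all large $t$; choosing $\delta<\alpha$ makes the $e^{-\alpha t(x-1)}$ factor dominate and forces the limit to $0$. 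Once this uniformity is secured, the collapse of the regular-variation limit is automatic.
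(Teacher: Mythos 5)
Your proof is correct, and it is best described as a rigorous completion of what the paper disposes of in one line. You share the paper's starting point---the change of variables $s=s_0e^{r_l}$, which turns the survival function into $\varphi_{r_l}(x)=L(s_0e^{x})\,s_0^{-\alpha}e^{-\alpha x}$---but where the paper simply rewrites the survival function with a non-constant ``effective exponent'' and declares the conclusion immediate, you actually test the defining limit of regular variation and show it fails. The step you single out as the hard part is precisely what the one-line proof glosses over: in the ratio $L(s_0e^{tx})/L(s_0e^{t})$ the two arguments are separated by the factor $e^{t(x-1)}$, which diverges with $t$, so the bare definition $L(ky)/L(y)\to 1$ for fixed $k$ is not applicable, and one genuinely needs Potter's bounds or the sub-polynomial estimate $\log L(y)=o(\log y)$ to control it. Your argument also delivers a strictly stronger conclusion than the statement requires: $\varphi_{r_l}$ is rapidly varying, hence lies in no class $RV_\rho$ for any finite $\rho$, not merely outside $RV_\alpha$. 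What the paper's formulation buys is a quick heuristic picture (the local tail exponent of the transformed variable drifts with the level instead of settling at a constant); what yours buys is an argument that would withstand scrutiny of the slowly varying factor, which the paper's displayed identity does not really track---as written, $L(s)\,s^{-\log(\log^{\alpha}(s))/\log(s)}$ simplifies to $L(s)(\log s)^{-\alpha}$, which is not equal to $L(s)s^{-\alpha}$, so your explicit and unambiguous computation is the preferable route.
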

\begin{proof} Immediate, thanks to the transformation $ \varphi_{r_l}(r_l)=L(s) s^{-\frac{\log \left(\log ^{\alpha }(s)\right)}{\log (s)}}$.\end{proof}
We note, however, that in practice, although we may need continuous compounding to build dynamics \cite{taleb2009finiteness}, our approach assumes such dynamics are contained in the anchor option price selected for the analysis (or $l$). Furthermore there is no tangible difference, outside the far tail, between $\log\frac{S}{S_0}$ and $\frac{S-S_0}{S_0}$.

\subsection{Second approach, $S$ has geometric returns in the regular variation class}
Let us now apply to real world cases where the returns $\frac{S-S_0}{S_0}$ are Paretan. Consider, for $r>l$, $S= (1+r) S_0$, where $S_0$ is the initial value of the underlying and $r \sim \mathcal{P}(l,\alpha)$ (Pareto I distribution) with survival function 
\begin{equation}
	\left(\frac{K-S_0}{l S_0}\right)^{-\alpha},  \;  K> S_0 (1+l)
\end{equation}
and fit to $C_m$ using $l=\frac{(\alpha -1)^{1/\alpha } C_m^{1/\alpha } \left(K-S_0\right){}^{1-\frac{1}{\alpha }}}{S_0}$, which, as before shows that practically all information about the distribution is embedded in $l$.
 	
 Let $\frac{S-S_0}{S_0}$ be in the regular variation class. For $S \geq S_0(1+l)$,
\begin{equation}
	C(K,S_0)=\frac{(l\; S_0)^{\alpha }  (K-S_0)^{1-\alpha }}{\alpha -1}
\end{equation}

\begin{figure*}[h!]
\includegraphics[width=\linewidth]{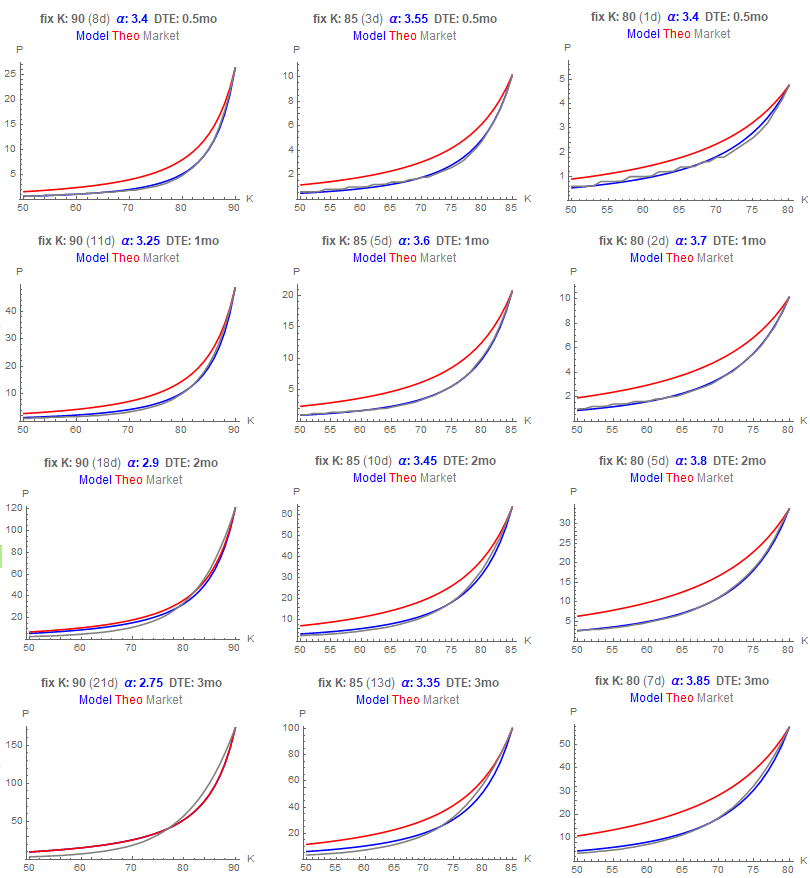}	\caption{Put Prices in the SP500 using "fix K" as anchor (from Dec 31, 2018 settlement), and generating an option prices using a tail index $\alpha$ that matches the market (blue) ("model), and in red prices for $\alpha= 2.75$. We can see that market prices tend to 1) fit a power law (matches stochastic volatility with fudged parameters), 2) but with an $\alpha$ that thins the tails. This shows how models claiming overpricing of tails are grossly misspecified.}\label{panel1}
\end{figure*}

\begin{figure*}[h!]
\includegraphics[width=\linewidth]{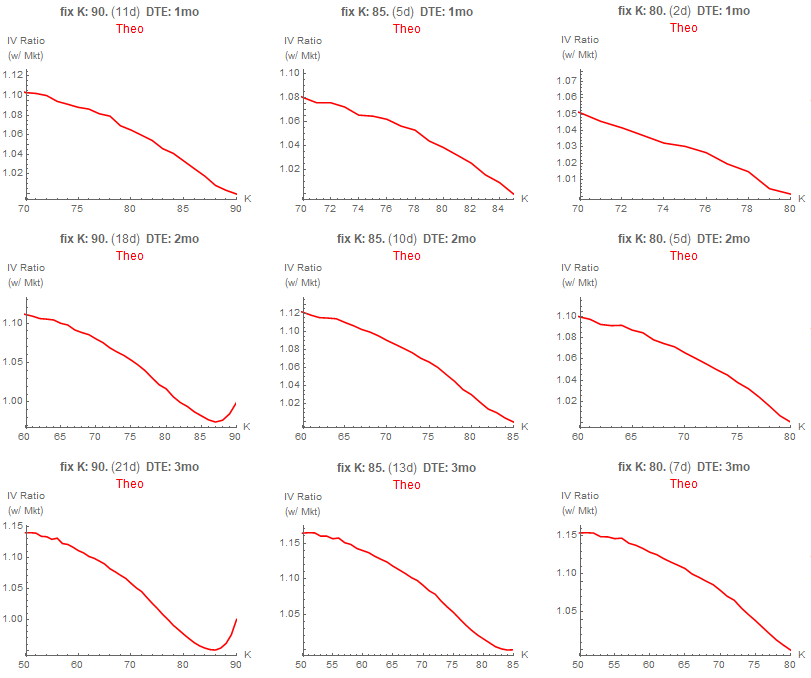}	\caption{Same results as in Fig \ref{panel1} but expressed using implied volatility. We match the price to implied volatility for downside strikes (anchor $90$, $85$, and $80$) using our model vs market, in ratios. We assume $\alpha=2.75$.}
\end{figure*}
We can thus rewrite Eq. \ref{simplified ratio} to eliminate $l$: 

\begin{result}[Relative Pricing under Distribution for $\frac{S-S_0}{S_0}$]
	
For  $K_1, K_2 \geq (1+l) S_0$,
\begin{equation}
	C(K_2)=\left(\frac{K_2-S_0}{K_1-S_0}\right)^{1-\alpha} C(K_1) .\label{compl ratio}
\end{equation}
\end{result}

\begin{figure}
\includegraphics[width=\linewidth]{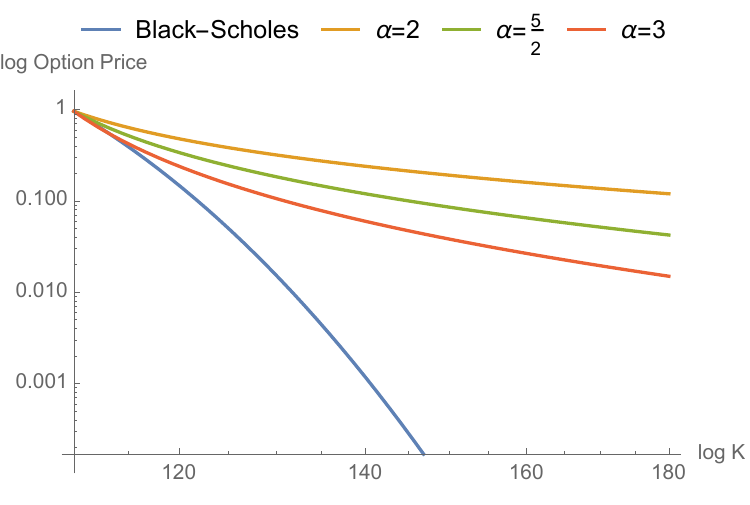}
\caption{The intuition of the Log log plot for the second calibration}\label{slope}
\end{figure}

Note: Unlike the pricing methods in the Black-Scholes modification class (stochastic and local volatility models, (see the expositions of Dupire \cite{dupire1994pricing}, Derman et al.\cite{demeterfi1999guide}, and Gatheral,  \cite{gatheral2006volatility},  finiteness of variance is not required neither for our model nor for option pricing in general, as shown in  \cite{taleb2009finiteness}. The only requirement is $\alpha>1$, that is, finite first moment.

\section{Put Pricing}
We now consider the put strikes (or the corresponding calls in the left tail, which should be priced via put-call parity arbitrage).  Unlike with calls, we can only consider the variations of $\frac{S-S_0}{S_0}$, not the logarithmic returns (nor those of $S$ taken separately).

We construct the negative side with a negative return for the underlying. Let $r$ be the rate of return $S= (1-r) S_0$, and Let $r>l>0$ be Pareto distributed in the positive domain, with density $f_r(r)= \alpha \; l^{\alpha } r^{-\alpha -1}$. We have by probabilistic transformation and rescaling the PDF of the underlying:

$$\begin{array}{cc}
  & 
\begin{array}{cc}
 f_S(S)=-\frac{\alpha  \left(-\frac{S-S_0}{l S_0}\right)^{-\alpha -1}}{l S_0} \lambda  & S \in \left[0,  (1-l)S_0\right) \\
\end{array}
 \\
\end{array}$$
where the scaling constant $\lambda=\left(\frac{1}{(-1)^{\alpha +1} \left(l^{\alpha }-1\right)}\right)$ is set in a way to get $f_s(S)$ to integrate to 1. The parameter $\lambda$, however, is close to $1$, making the correction negligible, in applications where $\sigma \sqrt{t} \leq \frac{1}{2}$ ($\sigma$ being the Black-Scholes equivalent implied volatility and $t$ the time to expiration of the option).

Remarkably, both the parameters $l$ and the scaling $\lambda$ are eliminated.
\begin{result}[Put Pricing]

	For  $K_1, K_2 \leq (1-l) S_0$,
	\begin{dmath}
P\left(K_{2}\right)=P\left(K_{1}\right) \frac{\left(K_{2}-S_{0}\right)^{1-\alpha}-S_{0}^{1-\alpha}\left((\alpha-1) K_{2}+S_{0}\right)}{\left(K_{1}-S_{0}\right)^{1-\alpha}-S_{0}^{1-\alpha}\left((\alpha-1) K_{1}+S_{0}\right)}	
\end{dmath}
\end{result}

\section{Arbitrage Boundaries}
Obviously, there is no arbitrage for strikes higher than the baseline one $K_1$ in previous equations. For we can verify the Breeden-Litzenberger result \cite{breeden1978prices}, where the density is recovered from the second derivative of the option with respect to the strike $\frac{\partial ^2 C(K)}{\partial K^2}|_{K\geq K_1}=\alpha  K^{-\alpha -1} L^{\alpha } \geq 0$. 

However there remains the possibility of arbitrage between strikes $K_1+\Delta K$, $K_1$, and $K_1-\Delta K$ by violating the following boundary: let $BSC(K,\sigma(K))$ be the Black-Scholes value of the call for strike $K$ with volatility $\sigma(K)$ a function of the strike and $t$ time to expiration. We have 
\begin{equation}
	C(K_1+\Delta K)+BSC(K_1-\Delta K) \geq 2\; C(K_1)\label{arbitrage},
\end{equation}
where $BSC(K_1,\sigma(K_1))=C(K_1)$. For inequality \ref{arbitrage} to be satisfied, we further need an inequality of call spreads, taken to the limit:
\begin{equation}
	\frac{\partial BSC(K,\sigma(K) )}{\partial K}|_{K=K_1}\geq \frac{\partial C(K )}{\partial K}|_{K=K_1}
\end{equation}

Such an arbitrage puts a lower bound on the tail index $\alpha$. Assuming 0 rates to simplify:
\begin{footnotesize}
\begin{dmath}
\alpha \geq \frac{1}{-\log \left(K-S_0\right)+\log (l)+\log
   \left(S_0\right)}\\
   	\log \left(\frac{1}{2} \text{erfc}\left(\frac{t \sigma (K)^2+2 \log
   (K)   -2 \log \left(S_0\right)}{2 \sqrt{2} \sqrt{t} \sigma (K)}\right)-\frac{\sqrt{S_0}
   \sqrt{t} \sigma '(K) K^{\frac{\log \left(S_0\right)}{t \sigma (K)^2}+\frac{1}{2}}
   \exp \left(-\frac{\log ^2(K)+\log ^2\left(S_0\right)}{2 t \sigma (K)^2}-\frac{1}{8} t
   \sigma (K)^2\right)}{\sqrt{2 \pi }}\right)
\end{dmath}
   \end{footnotesize}

\section{Comments}
As we can see in Fig. \ref{slope}, stochastic volatility models and similar adaptations (say, jump-diffusion or standard Poisson variations) eventually fail "out in the tails" outside the zone for which they were calibrated. There has been poor attempts to extrapolate the option prices using a fudged thin-tailed probability distribution rather than a Paretan one --hence the numerous claims in the finance literature on "overpricing" of tail options combined with some psychological comments on "dread risk" are unrigorous on that basis. The proposed methods allows us to approach such claims with more realism. 

Finaly, note that our approach isn't about absolute mispricing of tail options, but relative to a given strike closer to the money.

\section*{Acknowledgments} 
Bruno Dupire, Peter Carr, students at NYU Tandon School of Engineering, Bert Zwart and the participants at the Heavy Tails Workshop (April 6-9 2019) in Eindhoven he helped organize. We are particularly thankful to Joe Pimbley for invaluable help.

\bibliographystyle{IEEEtran}
\bibliography{/Users/nntaleb/Dropbox/Central-bibliography}

\begin{thebibliography}{10}
\providecommand{\url}[1]{#1}
\csname url@samestyle\endcsname
\providecommand{\newblock}{\relax}
\providecommand{\bibinfo}[2]{#2}
\providecommand{\BIBentrySTDinterwordspacing}{\spaceskip=0pt\relax}
\providecommand{\BIBentryALTinterwordstretchfactor}{4}
\providecommand{\BIBentryALTinterwordspacing}{\spaceskip=\fontdimen2\font plus
\BIBentryALTinterwordstretchfactor\fontdimen3\font minus
  \fontdimen4\font\relax}
\providecommand{\BIBforeignlanguage}[2]{{%
\expandafter\ifx\csname l@#1\endcsname\relax
\typeout{** WARNING: IEEEtran.bst: No hyphenation pattern has been}%
\typeout{** loaded for the language `#1'. Using the pattern for}%
\typeout{** the default language instead.}%
\else
\language=\csname l@#1\endcsname
\fi
#2}}
\providecommand{\BIBdecl}{\relax}
\BIBdecl

\bibitem{bouchaud1994black}
J.-P. Bouchaud and D.~Sornette, ``The black-scholes option pricing problem in
  mathematical finance: generalization and extensions for a large class of
  stochastic processes,'' \emph{Journal de Physique I}, vol.~4, no.~6, pp.
  863--881, 1994.

\bibitem{hamidieh2017estimating}
K.~Hamidieh, ``Estimating the tail shape parameter from option prices,''
  \emph{Journal of Risk}, vol.~19, no.~6, pp. 85--110, August 2017.

\bibitem{bingham1989regular}
N.~H. Bingham, C.~M. Goldie, and J.~L. Teugels, \emph{Regular variation}.\hskip
  1em plus 0.5em minus 0.4em\relax Cambridge university press, 1989, vol.~27.

\bibitem{mandelbrot1960pareto}
B.~Mandelbrot, ``The pareto-levy law and the distribution of income,''
  \emph{International Economic Review}, vol.~1, no.~2, pp. 79--106, 1960.

\bibitem{dyer1981structural}
D.~Dyer, ``Structural probability bounds for the strong pareto law,''
  \emph{Canadian Journal of Statistics}, vol.~9, no.~1, pp. 71--77, 1981.

\bibitem{Figlewski2018Risk}
S.~Figlewski, ``Risk-neutral densities: A review,'' \emph{Annual Review of
  Financial Economics}, vol.~10, pp. 329--359, 2018.

\bibitem{reinke2020risk}
M.~Reinke, ``Risk-neutral density estimation: Looking at the tails,'' \emph{The
  Journal of Derivatives}, vol.~27, no.~3, pp. 99--125, 2020.

\bibitem{Taleb2015unique}
N.~N. Taleb, ``Unique option pricing measure with neither dynamic hedging nor
  complete markets,'' \emph{European Financial Management}, vol.~21, no.~2, pp.
  228--235, 2015.

\bibitem{taleb2019statistical}
------, \emph{The Statistical Consequences of Fat Tails}.\hskip 1em plus 0.5em
  minus 0.4em\relax STEM Academic Press, 2020.

\bibitem{taleb2009finiteness}
------, ``Finiteness of variance is irrelevant in the practice of quantitative
  finance,'' \emph{Complexity}, vol.~14, no.~3, pp. 66--76, 2009.

\bibitem{dupire1994pricing}
B.~Dupire \emph{et~al.}, ``Pricing with a smile,'' \emph{Risk}, vol.~7, no.~1,
  pp. 18--20, 1994.

\bibitem{demeterfi1999guide}
K.~Demeterfi, E.~Derman, M.~Kamal, and J.~Zou, ``A guide to volatility and
  variance swaps,'' \emph{The Journal of Derivatives}, vol.~6, no.~4, pp.
  9--32, 1999.

\bibitem{gatheral2006volatility}
J.~Gatheral, \emph{The Volatility Surface: a Practitioner's Guide}.\hskip 1em
  plus 0.5em minus 0.4em\relax John Wiley \& Sons, 2006.

\bibitem{breeden1978prices}
D.~T. Breeden and R.~H. Litzenberger, ``Prices of state-contingent claims
  implicit in option prices,'' \emph{Journal of business}, pp. 621--651, 1978.

\end{thebibliography}

\end{document}